\newcommand{\ignore}[1]{}
\newtheorem{theorem}{Theorem}
\newtheorem{lemma}[theorem]{Lemma}
\newtheorem{definition}[theorem]{Definition}
\newcommand{\qed}{\hfill $\Box$}
\newenvironment{proof}{\par\noindent{\noindent \bf Proof:}}{\qed \par}
\newcommand{\xor }[0]{\oplus }
\newcommand{\zo }[0]{\{0,1\} }
\renewcommand{\P}[0]{{\cal P}}
\newcommand{\E}[0]{{\cal E}}
\newcommand{\Y}[0]{{\cal{Y}}}
\newcommand{\A}{{\cal A}}
\newcommand{\ls}[1]
   {\dimen0=\fontdimen6\the\font \lineskip=#1\dimen0
\advance\lineskip.5\fontdimen5\the\font \advance\lineskip-\dimen0
\lineskiplimit=.9\lineskip \baselineskip=\lineskip
\advance\baselineskip\dimen0 \normallineskip\lineskip
\normallineskiplimit\lineskiplimit \normalbaselineskip\baselineskip
\ignorespaces }
\newcommand{\remove}[1]{}
\newcounter{alpha:count}
{\begin{list}{{\alph{alpha:count}.}}%
{\usecounter{alpha:count}\setlength{\listparindent}{0pt}}}%
{\end{list}}
\newcounter{roman:count}
{\begin{list}{{\roman{roman:count}.}}%
{\usecounter{roman:count}\setlength{\listparindent}{0pt}}}%
{\end{list}}
\newcommand{\eps}{\varepsilon}
\newcommand{\set}[1]{\{#1\}}
\newcommand{\K}[0]{{\cal K}}
\def\inr{\in_R}
\def\ENC{Enc}
\def\DEC{Dec}
\def\P{\cal P}
\def\Q{\cal Q}
\def\X{\cal X}
\def\REAL{\mbox{\sc real}}
\def\RP{\mbox{\sc rp}}
\def\Preal{P_{REAL}}
\def\Prp{P_{RP}}
\def\Dist{Dist}
\def\Cinv{Y_c}
\date{September 22, 2009}
\begin{document} 
\title{On Compression of Data Encrypted with Block Ciphers}

\author{Demijan~Klinc, Carmit~Hazay, Ashish~Jagmohan, Hugo~Krawczyk, and Tal~Rabin
\thanks{D. Klinc is with the Georgia Institute of Technology, Atlanta, GA. Email: demi@ece.gatech.edu}%
\thanks{C. Hazay is with Bar-Ilan University, Ramat-Gan, Israel. Email: harelc@cs.biu.ac.il}%
\thanks{A. Jagmohan, H. Krawczyk, and T. Rabin are with IBM T.J. Watson Research Labs, Yorktown Heights and Hawthorne, NY. %
Email:~\{ashishja, talr\}@us.ibm.com, hugo@ee.technion.ac.il}}
\maketitle

\begin{abstract}
This paper investigates compression of data encrypted with block ciphers, such as the Advanced Encryption Standard (AES).  It is shown that such data can be feasibly compressed without knowledge of the secret key. Block ciphers operating in various chaining modes are considered and it is shown how compression can be achieved without compromising security of the encryption scheme. Further, it is shown that there exists a fundamental limitation to the practical compressibility of block ciphers when no chaining is used between blocks. Some performance results for practical code constructions used to compress binary sources are presented.
\end{abstract}

\begin{IEEEkeywords}
Compression, encrypted data, block ciphers, CBC mode, ECB mode, Slepian-Wolf coding.
\end{IEEEkeywords}

\section{Introduction}

We consider the problem of compressing encrypted data. Traditionally in communication systems, data from 
a source is first compressed and then encrypted before it is transmitted over a channel to the receiver. While in many cases this 
approach is befitting, there exist scenarios where there is a need to reverse the order in which data encryption and compression 
are performed. Consider for instance a network of low-cost sensor nodes that transmit sensitive information over 
the internet to a recipient. The sensor nodes need to encrypt data to hide it from potential eavesdroppers, but they may not be able to perform compression as that would require additional hardware and thus higher implementation cost. On the other hand, the network operator that is responsible for transfer of data to the recipient wants to compress the data to maximize the utilization of its resources. It is important to note that the network operator is not trusted and hence does not have access to the  key used for encryption and decryption. If it had the key, it could simply decrypt data, compress and encrypt again.

We focus on compression of encrypted data where the encryption procedure utilizes block ciphers such as the Advanced Encryption Standard (AES)~\cite{US01AES}  and Data Encryption Standard (DES)~\cite{NBS77Data}. Loosely speaking, block ciphers operate on inputs of fixed length and serve as important building blocks that can be used to construct secure encryption schemes.


For a fixed key a block cipher is a bijection, therefore the entropy of an input is the same as that of the output. It follows that it is theoretically possible to compress the source to the same level as before encryption. However, in practice, encrypted data appears to be random and the conventional compression techniques do not yield desirable results. Consequently, it was long believed that encrypted data is practically incompressible. In a surprising 
paper~\cite{Johnson04OnCompressing}, the authors break that paradigm and show that the problem of compressing one-time pad encrypted data translates to the problem of compressing correlated sources, which was solved by Slepian and Wolf in~\cite{Slepian73Noiseless} and for which practical and efficient codes are known. Compression is practically achievable due to a simple symbol-wise correlation between the key (one-time pad) and the encrypted message. However, when such correlation is more complex, as is the case with block ciphers, the approach to Slepian-Wolf coding utilized in~\cite{Johnson04OnCompressing} is not directly applicable.

In this paper, we investigate if data encrypted with block ciphers can be
compressed without access to the key. We show that block ciphers in conjunction
with the most commonly used chaining modes in practice (e.g., \cite{Kent05Security, Dierks08TLS}) 
are practically compressible for some types of sources. To our knowledge this is the first work to show that non-negligible compression gains can be achieved for cryptographic algorithms like AES or DES when they are used in non-stream modes; in particular, this work offers a solution to the open problem formulated in~\cite[Sec. 3.3]{Johnson04OnCompressing2}.

Moreover, by using standard techniques in the cryptographic
literature we prove that the proposed compression schemes do not compromise
the security of the original encryption scheme.  
We also show that there exists a fundamental limitation to the compression
capability when the input to the block cipher is applied to a single-block
message without chaining as in ECB mode (see Section \ref{ssec:ecb}). 

The outline of this paper is as follows. Section~\ref{sec:background} defines the problem that we seek to solve and summarizes existing work on the subject. Section~\ref{sec:blockCiphers} focuses on block ciphers and explains how they can be compressed without knowledge of the secret key in various modes of operation. Section~\ref{sec:security} discusses security of the proposed compression scheme, while Section~\ref{sec:publicEncryptionSchemes} touches on the subject of compressing data encrypted with public-key encryption schemes. Section~\ref{sec:simResults} presents some simulation results for binary memoryless sources and finally, Section~\ref{sec:conclusion} concludes the paper.

\section{Preliminaries}
\label{sec:background}

We begin with a standard formal definition of an encryption scheme as stated in~\cite{Katz07Introduction}.
A private-key encryption scheme is a triple of algorithms $(Gen,\ENC,\DEC)$, where
$Gen$ is a probabilistic algorithm that outputs a key $K$ chosen according to some distribution that is determined by the scheme;
the encryption algorithm $\ENC$ takes as input a key $K$ and a plaintext message $X$ and outputs a ciphertext $\ENC_K(X)$;
the decryption algorithm $\DEC$ takes as input a key $K$ and a ciphertext $\ENC_K(X)$ and outputs a plaintext $\DEC_K(\ENC_K(X)) = X$. It is required that for every key $K$
output by $Gen$ and every plaintext $X$, we have $\DEC_K(\ENC_K(X)) = X$.

In private-key encryption schemes of concern to us in this paper the same key is used for encryption and decryption algorithms. Private-key encryption schemes can be divided in two categories: block ciphers and stream ciphers. Stream ciphers encrypt plaintext one symbol at a time, typically by summing it with a key (XOR operation for binary alphabets). In contrast, block ciphers represent a different approach where encryption is accomplished by means of nonlinear mappings on input blocks of fixed length. Common examples of block ciphers are AES and DES.
Typically, block ciphers are not used as a stand-alone encryption procedure. Instead, they
are combined to work on variable-length data using composition
mechanisms known as chaining modes or modes of operation, as specified
in Section~\ref{sec:blockCiphers}.

We proceed with a formulation of the source coding problem with decoder side-information, which
is illustrated in Figure~\ref{fig:swc}. Consider random variables
$X$ (termed the source), and $S$ (termed the side-information),
both over a finite-alphabet and with a joint probability distribution $P_{XS}$. Consider a
sequence of independent $n$ realizations of $(X,S)$ denoted by $\{ X_{i},S_{i}\}_{i=1}^{n}$ .

\begin{figure}[htb!]
\centering
\includegraphics[width=3.5in]{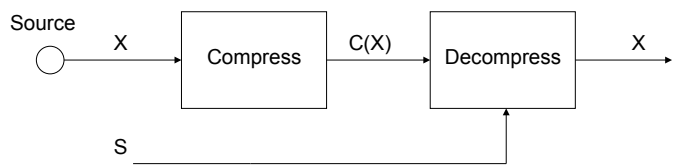}
\caption{Lossless source coding with decoder side-information.}
\label{fig:swc}
\end{figure}

The problem at hand is that of losslessly encoding $\{ X_{i}\}_{i=1}^{n}$,
with $\{ S_{i}\}_{i=1}^{n}$ known only to the decoder. In \cite{Slepian73Noiseless},
Slepian and Wolf showed that, for sufficiently large block length $n$, this
can be done at rates arbitrarily close to the conditional entropy
$H(X|S).$ Practical Slepian-Wolf coding schemes use constructions
based on good linear error-correcting codes~\cite{Aaron02Compression, Frias01Compression, Liveris02Compression}.

\begin{figure}[t]
\centering
	\subfigure[]
	{
		\label{fig:enc_comp_a}
		\includegraphics[width=2.8in]{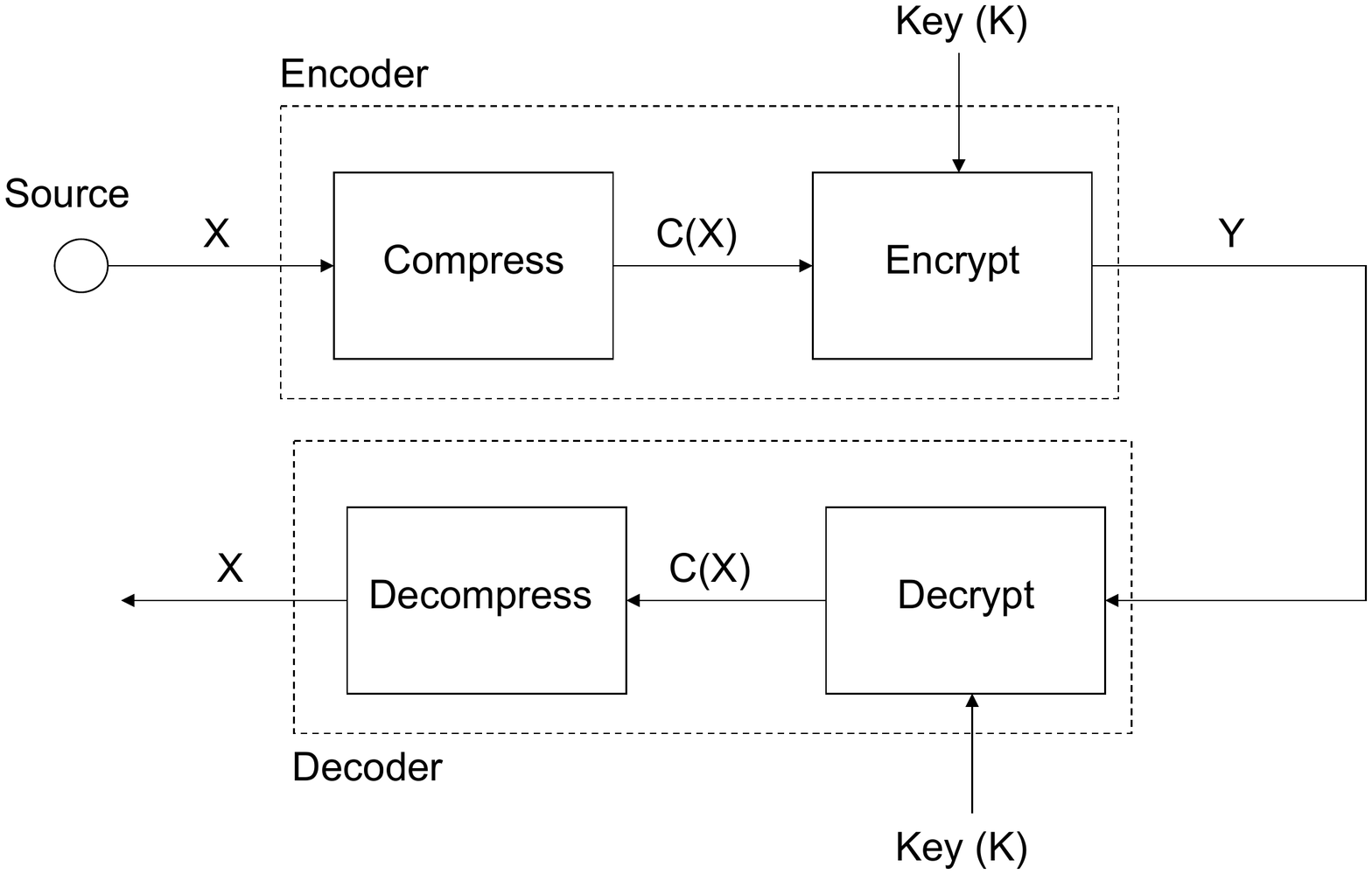}
	}
	\subfigure[]
	{
		\label{fig:enc_comp_b}
		\includegraphics[width=2.8in]{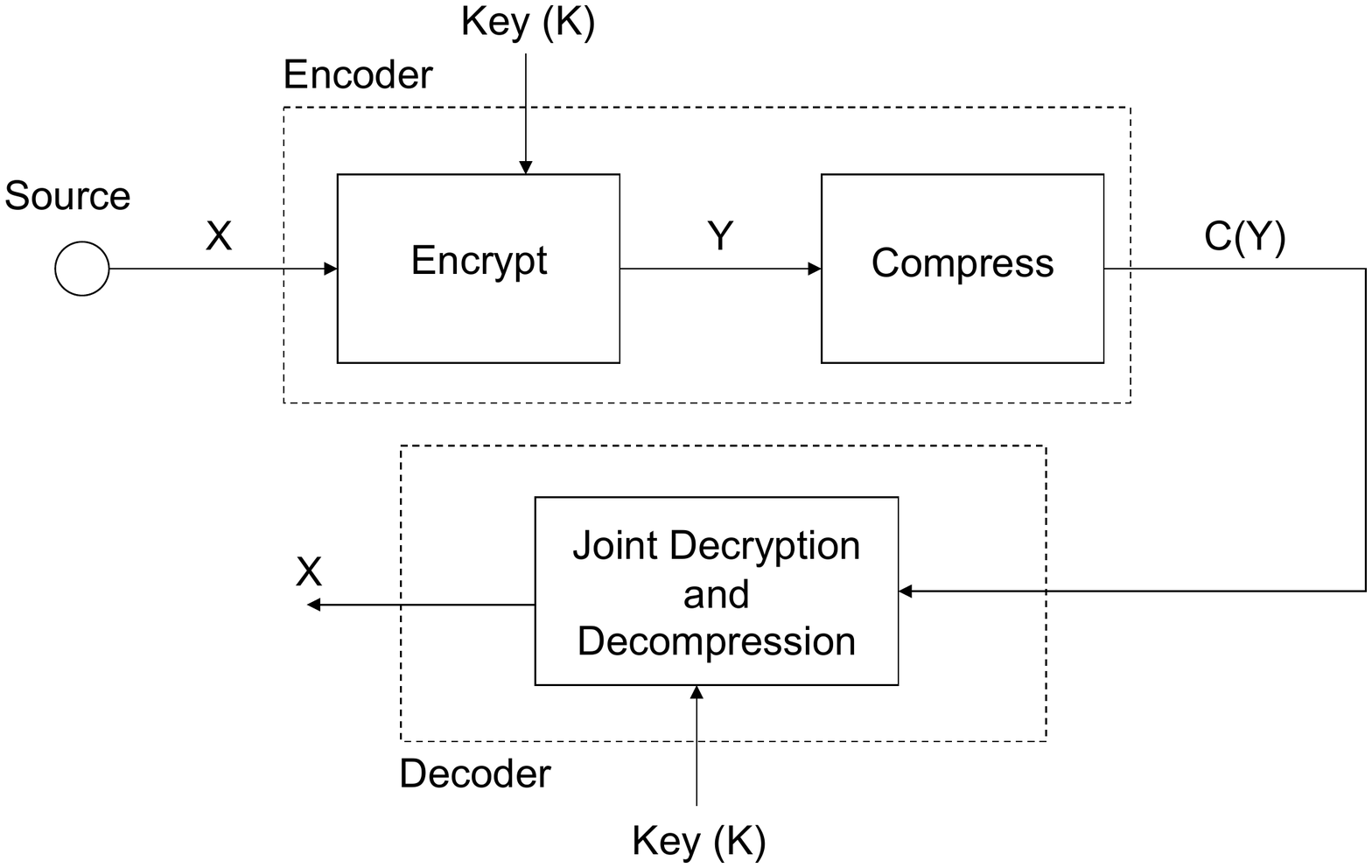}
	}
	\caption{Systems combining compression and encryption. (a) Traditional system 
		with compression done first. (b) System with encryption done before compression.}
  	\label{fig:enc_comp}
\end{figure}

%

In this paper, we are interested in systems that perform both compression
and encryption, wherein the compressor does not have access to the
key. Typically, in such systems, encryption is performed after compression
as depicted in Figure \ref{fig:enc_comp_a}. This is a consequence
of the traditional view which considers ciphertext data hard to compress
without knowledge of the key. In \cite{Johnson04OnCompressing} a system similar
to Figure \ref{fig:enc_comp_b} is considered instead, in which the
order of the encryption and compression operations at the encoder
is reversed (note, though, that only the encryptor has access to the key). 
The authors consider encryption of a plaintext $X$ using a one-time
pad scheme, with a finite-alphabet key (pad) K, to generate
the ciphertext $Y$, i.e. 
\begin{displaymath}
Y_i\triangleq X_i\oplus K_i. 
\end{displaymath}
\noindent This is followed by compression, which is agnostic
of $K$, to generate the compressed ciphertext $C(Y)$. 

The key insight underlying the approach in \cite{Johnson04OnCompressing} is
that the problem of compression in this case can be formulated as
a Slepian-Wolf coding problem. In this formulation the ciphertext
$Y$ is cast as a source, and the shared key $K$ is cast as
the decoder-only side-information. The joint distribution of the source
and side-information can be determined from the statistics of the
source. For example, in the binary case with a uniformly distributed $K_i$
and $X_i$ with $\Pr[X_i=1] = p$, we have
\begin{equation}
\Pr(Y_i\neq k|K_i=k)=p.\label{eq:corr}
\end{equation}
\noindent The decoder has knowledge of $K$ and of the source statistics.
It uses this knowledge to reconstruct the ciphertext $Y$ from
the compressed message $C(Y)$, and to subsequently decrypt
the plaintext $X$. This formulation is leveraged in \cite{Johnson04OnCompressing}
to show that exactly the same lossless compression rate, $H(X)$,
can be asymptotically achieved in the system shown in Figure \ref{fig:enc_comp_b},
as can be achieved in Figure \ref{fig:enc_comp_a}. Further, this
can be done while maintaining information-theoretic security.

The one-time pad and stream ciphers, while convenient for analysis, are not the only forms of encryption in practice. 
In fact, the prevalent method of encryption uses block ciphers, thus an obviously desirable extension of the technique
in \cite{Johnson04OnCompressing} would be to conventional encryption schemes such
as the popular AES encryption method. Attempting to do so, however,
proves to be problematic. The method in \cite{Johnson04OnCompressing} leverages
the fact that in a one-time pad encryption scheme there exists a simple symbol-wise
correlation between the key $K$ and the ciphertext $Y$,
as seen in (\ref{eq:corr}). Unfortunately, for block ciphers 
such as AES no such correlation structure is known. Moreover, any
change in the plaintext is diffused in the ciphertext, and quantifying
the correlation (or the joint probability distribution) of the key
and the ciphertext is believed to be computationally infeasible.

In the remainder of this paper, we show how this problem can be circumvented
by exploiting the chaining modes popularly used with block ciphers. Based on
this insight, we present an approach for compressing data encrypted
with block ciphers, without knowledge of the key. As in \cite{Johnson04OnCompressing},
the proposed methods are based on the use of Slepian-Wolf coding. 

Before continuing, we formalize the notion of a ``post-encryption compression"
which is used througout this paper with the following definition:

\smallskip
\begin{definition}[Post-Encryption Compression (PEC) Scheme]\label{def-ped}

Let $\E=(Gen,\ENC,\DEC)$ be an encryption scheme, as defined above, with plaintext domain $\X$ and
ciphertext range $\Q$, and let $\P$ be a probability distribution over $\X$.
Let $C$ be a compression function defined over $\Q$, and $D$ be a (possibly
probabilistic) decoding
function with the property that, for any key $K$ generated by $Gen$, 
$D_K(C(\ENC_K(X))=X$ with probability $1 - \delta$.
We call the pair $(C,D)$ an $(\E,\P, \delta)$-{\sf PEC} scheme.

\end{definition}

\smallskip
Note that in this definition, $D$ is given access to the encryption key
while $C$ works independently of the key.
Both $C$ and $D$ may 
be built for a specific distribution
$\P$ (in particular, correct decoding may be guaranteed, with high probability,
only for plaintexts chosen according to a specific $\P$).
We often assume that the plaintext
distribution $\P$ is efficiently samplable, which means that there exists an efficient
randomized algorithm whose output distribution is $\P$. The probability of error $\delta$ is 
taken over the choice of $X$ and the choice of random
coins if $D$ is randomized. To simplify notation, we shall
omit the $(\E,\P, \delta)$ parameters in the exposition if they are irrelevant or evident from the
context, and use the term PEC scheme.

PEC schemes may be tailored to a specific cipher, say AES or DES.
Most often, however, one is interested in schemes that can support different
ciphers, for instance both AES and DES, or even a full family of encryption schemes.
All PEC schemes presented in this paper can
work with any block cipher, therefore we call them {\sf generic} PEC schemes.
A generic PEC scheme cannot be tailored to the specific details of
the underlying cipher but rather handles the cipher as a black box.
That is, it is not necessary that the decoder $D$ knows the
specifics of the encryption scheme or even its
encryption/decryption key. Rather, it suffices that $D$ has access to a pair
of encryption and decryption oracles, denoted by $\ENC$ and $\DEC$, that provide
$D$ with encryptions and decryptions, respectively, of any plaintext or
ciphertext queried by $D$.

\section{Compressing Block-cipher Encryption}
\label{sec:blockCiphers}

In contrast to stream ciphers, such as the one-time pad, block ciphers are highly
nonlinear and the correlation between the key and the
ciphertext is, by design, hard to characterize. If a block cipher operates on each block of data individually, two
identical inputs will produce two identical outputs. While this weakness
does not necessarily enable an unauthorized user to understand 
contents of an individual block it can reveal valuable information; for example, about
frequently occurring data patterns. To address this problem, various
chaining modes, also called modes of operation, are used in conjunction
with block ciphers. The idea is to randomize each plaintext block,
by using a randomization vector derived as a function of previous
encryptor inputs or outputs. The randomization prevents two identical
plaintext blocks from being encrypted into two identical ciphertext blocks, thus preventing 
leakage of information about data patterns. 

We are interested in the following problem. Consider a sequence of plaintext blocks $\mathbf{X}^n=\{X_i\}_{i=1}^n$, where 
each block 
$X_i$ is drawn from the set ${\mathcal{{X}}}^{m} = \{0,1\}^m$. Further,
we assume that the blocks $X_i$ are generated by an i.i.d. source with a distribution $P_X$.
The blocks in $\mathbf{X}^n$ are encrypted with a block-cipher based private-key encryption scheme $(Gen,\ENC,\DEC)$.
In most cases of interest, block-cipher based encryption schemes  use an initialization vector IV that is drawn uniformly at 
random from ${\mathcal{{X}}}^{m}$ by the encryption algorithm $\ENC_K$. Let the encryption algorithm be characterized
by the mapping $\ENC_K: (\mathcal{X}^m)^n \rightarrow \mathcal{X}^m \times (\mathcal{X}^m)^n $.
For a sequence of plaintext blocks $\mathbf{X}^n$ at input, the encryption algorithm generates $\ENC_K(\mathbf{X}^n) = \{\text{IV}, \mathbf{Y}^n\}$, where 
$\mathbf{Y}^n = \{Y_i\}_{i=1}^n$ denotes a sequence of ciphertext blocks and each block $Y_i \in {\mathcal{{X}}}^{m}$.
The problem at hand is to compress $\ENC_K(\mathbf{X}^n)$ without knowledge of $K$.

In the remainder of this section we focus on the cipher block chaining (CBC) mode and the electronic code book (ECB) mode. 
The CBC mode is interesting because it is the most common mode of operation used with block ciphers (e.g. in Internet protocols TLS and IPsec), while our treatment of the ECB mode provides 
fundamental insight
about the feasibility of performing compression on data compressed with block ciphers without chaining. Two other modes of operation associated with
block ciphers are worth mentioning: the output feedback (OFB) mode  and the cipher feedback (CFB) mode. The solution to compressing the latter two 
modes is a relatively straightforward extension of the methods from~\cite{Johnson04OnCompressing}, therefore it is presented in Appendix~\ref{app:ofb_cfb}.

\subsection{Cipher Block Chaining (CBC)}
\label{ssec:cbc}

The most common mode of operation is CBC. Depicted in Figure~\ref{fig:cbc}, block ciphers in CBC mode are employed 
 as the default mechanism in widespread security standards such as 
 IPSec~\cite{Kent05Security} and TLS/SSL~\cite{Dierks08TLS} and hence it is a common method of encrypting internet traffic.

\begin{figure}[htb!]
\centering
\includegraphics[width=0.65\textwidth]{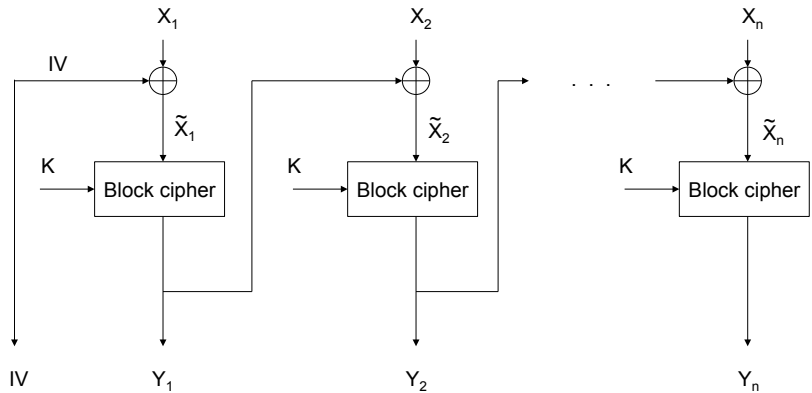}
\caption{Cipher block chaining (CBC).}
\label{fig:cbc}
\end{figure}

In the CBC mode, each plaintext
block $X_{i}$ is randomized prior to encryption, by being XOR-ed with
the ciphertext block corresponding to the previous plaintext block $Y_{i-1}$, to obtain $\tilde{X}_i$. 
Ciphertext block $Y_i$ is
generated by applying the block cipher
with key $K$ to the randomized plaintext block $\tilde{X}_i$

\begin{equation}
Y_i=B_K\big(X_i\oplus Y_{i-1}\big),
\label{eq:CBC}
\end{equation}
\noindent 
where $Y_0 = \text{IV}$ and $B_K: {\mathcal{{X}}}^{m} \rightarrow {\mathcal{{X}}}^{m}$ is the block cipher mapping using the key $K$. At the output of the encryption algorithm we have $\ENC_K(\mathbf{X}^n) = (\text{IV}, \mathbf{Y}^n)$.

Notice that, contrary to modes OFB and CFB, ciphertext blocks in $\mathbf{Y}^n$ are not obtained by means
of a bitwise XOR operation. Instead, they are obtained as outputs of highly nonlinear block ciphers,
 therefore the methods from~\cite{Johnson04OnCompressing} cannot be applied directly to compress in CBC mode.

The key insight underlying the proposed approach for compression can now be described.
The statistical relationship between the key $K$ and
the $i$-th ciphertext block $Y_i$
is hard to characterize. However, the joint distribution of the randomization
vector $Y_{i-1}$ and the $i$-th input to the block cipher, $\tilde{X}_i$,
is easier to characterize, as it is governed by the distribution of
the plaintext block $X_i$. For example, in the i.i.d source case we
are considering, $Y_{i-1}$ and $\tilde{X}_{i}$ are
related through a symbol-wise model governed by the distribution $P_{X}$.
The correlation induced by the use of the chaining mode can be exploited
to allow compression of encrypted data using Slepian-Wolf coding,
as we will now show.

Let $(C_\text{CBC},D_\text{CBC})$ denote a  Slepian-Wolf code
with encoding rate $R$ and block length $m$. The Slepian-Wolf encoding function
is defined as $C_\text{CBC}: \mathcal{X}^m \rightarrow \{1,\ldots,2^{mR}\}$, and the Slepian-Wolf decoding function as
$D_\text{CBC}:  \{1,\ldots,2^{mR}\} \times \mathcal{X}^m \rightarrow \mathcal{X}^m$.
The proposed compression method is illustrated
in Figure \ref{fig:compressor}. 
\begin{figure}[htb!]
\centering
\includegraphics[width=0.6\textwidth]{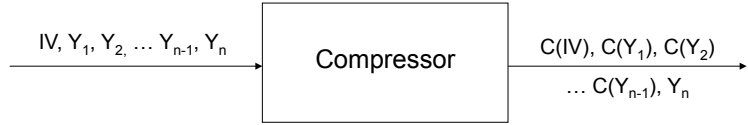}
\caption{Compressor.}
\label{fig:compressor}
\end{figure}
The input to the compressor is $\ENC_K(\mathbf{X}^n) = (\text{IV}, \mathbf{Y}^n)$. Since $Y_i\in{\mathcal{{X}}}^{m}$,
the total length of the input sequence is $(n+1)\cdot m\cdot\log|{\mathcal{{X}}}|$
bits. The compressor applies the Slepian-Wolf encoding function $C_\text{CBC}$ to the IV and
each of the first $n-1$ ciphertext blocks independently, while the
$n$-th block is left unchanged. Thus, the output of the compressor
is the sequence $(C_\text{CBC}(\text{IV}), C_\text{CBC}(Y_1),\ldots C_\text{CBC}(Y_{n-1}),Y_n)$. The length
of the output sequence is $n\cdot m\cdot R+m\cdot\log|{\mathcal{{X}}}|$
bits. Thus, the compressor achieves a compression factor of

\begin{displaymath}
\lim_{n\rightarrow\infty} \frac{(n+1)\cdot m\cdot\log|{\mathcal{{X}}}|}{n\cdot m\cdot R+m\cdot\log|{\mathcal{{X}}}|} = \frac{\log|{\mathcal{{X}}}|}{R}
\end{displaymath}

\noindent 
for large $n$. Note that the compressor does not need to know the key $K$. Also, note that this approach only requires a compressed IV, which by itself is incompressible, 
therefore no performance loss is inflicted by the uncompressed last block.

The joint decompression and decryption method is shown in Figure \ref{fig:decompressor}.
The received compressed sequence is decrypted and decompressed serially, from right to left.
In the first step $Y_n$, which is received uncompressed, is decrypted using the key $K$ to generate $\tilde{X}_{n}$.
\begin{figure}[htb!]
\centering
\includegraphics[width=0.65\textwidth]{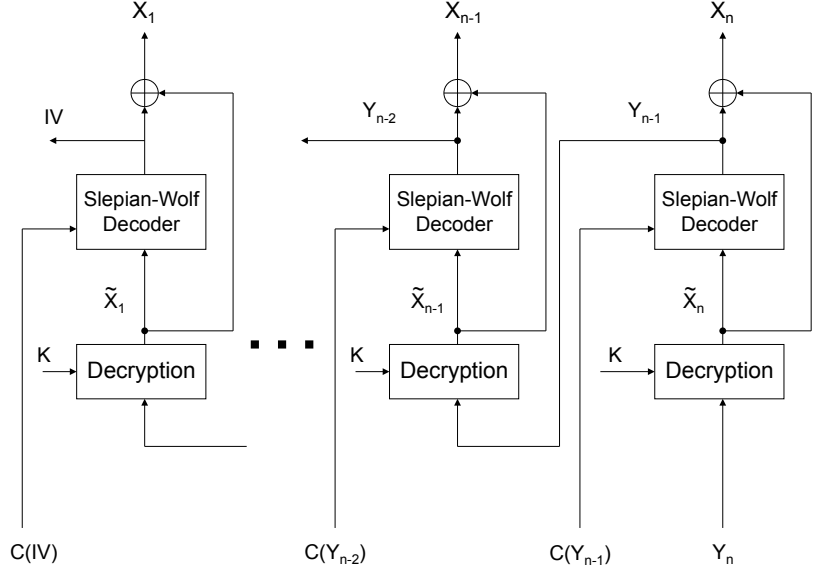}
\caption{Joint decryption and decoding at the receiver. It is performed serially from right to left.}
\label{fig:decompressor}
\end{figure}
Next, Slepian-Wolf decoding is performed to reconstruct $Y_{n-1}$
using $\tilde{X}_{n}$ as side-information, and the compressed bits
$C_\text{CBC}(Y_{n-1})$. The decoder computes $\hat{Y}\triangleq D_\text{CBC}(C_\text{CBC}(Y_{n-1}),\tilde{X}_{n})$,
such that $\hat{Y}=Y_{n-1}$ with high-probability
if the rate $R$ is high enough. Once $Y_{n-1}$ has been recovered by the Slepian-Wolf
decoder, the plaintext block can now be reconstructed as $X_n=Y_{n-1}\oplus\tilde{X}_{n}$.
The decoding process now proceeds serially with $Y_{n-1}$
decrypted to generate $\tilde{X}_{n-1}$, which acts as the new Slepian-Wolf
side-information. This continues until all plaintext blocks have been
reconstructed. 

For large $m$, it follows from the
Slepian-Wolf theorem that the rate required to ensure correct reconstruction
of the $(i-1)$-th block with high probability is given as

\begin{eqnarray}
mR & = &H\big(Y_{i-1}|\tilde{X}_i\big) = H\big(Y_{i-1}|Y_{i-1}\oplus X_i\big) \nonumber \\
  & = &H\big(Y_{i-1}, Y_{i-1}\oplus X_i |Y_{i-1}\oplus X_i\big) = H\big(Y_{i-1}, X_i |Y_{i-1}\oplus X_i\big) \nonumber \\
  & = &H\big(X_i | Y_{i-1} \oplus X_i\big) \leq H(X_i).\label{eq:R_sw}
\end{eqnarray}
\noindent  
It is assumed that the IV is drawn uniformly at random, therefore $Y_i$ is uniformly distributed for all $i$. Consequently,
the equation~\ref{eq:R_sw} holds with equality and we have $R = \frac{1}{m} H(X_i)$.

In practice, as we will see in Section 4, $m$ is typically
small. In this case, the required rate $R$ is a
function of $P_{X}$, $m$, the acceptable decoding error probability,
and the non-ideal Slepian-Wolf codes used.

\subsection{Electronic Code Book (ECB)}
\label{ssec:ecb}

We have seen that ciphertexts generated by a block cipher in OFB, CFB, and CBC modes can be compressed without knowledge of the
encryption key. The compression
schemes that we presented rely on the specifics of chaining operations.
A natural question at this point is to what extent can the output of a block
cipher be compressed without chaining, i.e. when a block cipher is applied to a
single block of plaintext. This mode of operation, depicted in Figure~\ref{fig:ecb},
is called the electronic code book (ECB) mode.

\begin{figure}[htb!]
\centering
\includegraphics[width=0.65\textwidth]{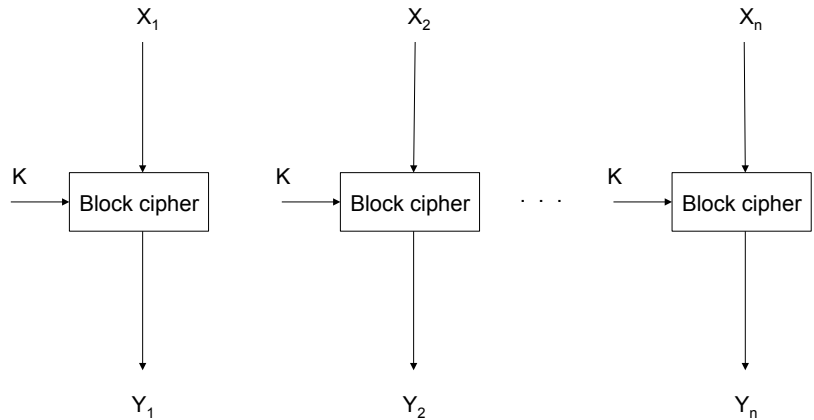}
\caption{Electronic Codebook (ECB).}
\label{fig:ecb}
\end{figure}

Block cipher with a fixed key is a permutation, therefore the
entropy of the ciphertext is the same as the entropy of the plaintext. In effect, compression in ECB mode is theoretically
possible.
The question is,  whether it is possible to design a {\em generic and efficient}
post-encryption compression scheme such as with other modes of operation.

We claim the answer to this question is negative, except for some low-entropy
distributions or very low compression rates (e.g., compressing a ciphertext
by a few bits).
We show that for a given compressed ciphertext $C(Y_i)$, the decoder cannot do
significantly better than to operate in one of the following two
decoding strategies, which we refer to as {\em exhaustive} strategies:
\begin{enumerate}
\item enumerate all possible plaintexts in decreasing order of
probability and compute a ciphertext for each plaintext until the
ciphertext that compresses to $C(Y_i)$ is found;
\item enumerate all ciphertexts that compress to $C(Y_i)$ and decrypt them to find
the original plaintext.
\end{enumerate}
More precisely, we show that a generic compression scheme that
compresses the output of a block cipher and departs significantly from
one of the above strategies can be converted into an algorithm that
breaks the security of the block cipher. In other words, a scheme
that compresses the output of a secure block cipher either requires an
infeasible amount of computation, i.e. as much as needed to break the block cipher, 
or it must follow one of the above exhaustive strategies. 

\begin{definition}
If $(C,D)$ is a PEC scheme that follows any of the
above exhaustive strategies, we say that $(C,D)$ is an {\em exhaustive PEC
scheme.}
\end{definition}

Our result on the compressibility of block ciphers in ECB mode can now be stated.

\smallskip

\begin{theorem}
\label{thm:lowerbound}
Let $B$ be a secure block cipher\footnote{For a definition of block cipher security please refer to Appendix~\ref{sec:cryptdefs}.}, let $\E=(Gen,\ENC,\DEC)$ be an encryption scheme, where
$Gen$ chooses $K$ uniformly from $\K$ and $\ENC=B_K$, $\DEC=B_K^{-1}$, and let $\P$ be an efficiently-samplable plaintext
distribution. Let $(C,D)$ be a $(\E,\P, \delta)$-PEC scheme. If $(C,D)$ is generic for block ciphers, then
$(C,D)$ is either exhaustive or computationally infeasible
(or both).%
\end{theorem}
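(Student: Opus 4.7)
The plan is to prove the contrapositive via a reduction to the PRP security of $B$: assume $(C,D)$ is generic, computationally feasible, and non-exhaustive, and construct a polynomial-time distinguisher $\A$ that tells $B_K$ apart from a uniformly random permutation with non-negligible advantage, contradicting the security of $B$.

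The first task is to pin down a precise notion of non-exhaustive. Since $(C,D)$ is generic, $D$ touches the cipher only through its $\ENC$ and $\DEC$ oracles. I would call $(C,D)$ non-exhaustive if there is a non-negligible $\varepsilon$ such that, with probability at least $\varepsilon$ over $K\leftarrow Gen$, $X\leftarrow \P$, and $D$'s randomness, $D$ outputs the correct plaintext $X$ on input $C(\ENC_K(X))$ \emph{and} never queries $\ENC_K$ on the true plaintext $X$ nor $\DEC_K$ on the true ciphertext $Y=\ENC_K(X)$ during its execution. Every successful run of an exhaustive strategy~1 scheme must at some point query $\ENC_K(X)$, and every successful run of an exhaustive strategy~2 scheme must at some point query $\DEC_K(Y)$, so any scheme that succeeds non-negligibly often while avoiding both hitting queries is genuinely doing something beyond enumeration.

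Given such a $(C,D)$, the distinguisher $\A^O$ with $O\in\{B_K,\pi\}$ proceeds as follows: sample $X\leftarrow \P$; query $Y\leftarrow O(X)$; compute $c=C(Y)$; simulate $D(c)$, routing its forward queries to $O$ and its backward queries to $O^{-1}$ while recording the sets of forward queries $S_f$ and backward queries $S_b$; output $1$ iff $D$'s output $X^*$ satisfies $X^*=X$, $X\notin S_f$, and $Y\notin S_b$. In the real world, the non-exhaustiveness assumption yields $\Pr[\A^{B_K}=1]\geq \varepsilon-\delta$. In the random world, conditioning on $D$'s full transcript together with the event $\{X\notin S_f\wedge Y\notin S_b\}$, the posterior distribution of $X$ is $\P$ restricted and renormalized to the unqueried set $U_X=\{0,1\}^m\setminus(S_f\cup\pi^{-1}(S_b))$; since $\pi$ is a uniform random permutation independent of $X$ and $D$ is polynomial-time, $U_X$ misses only polynomially many points of the support of $\P$, so the probability that $D$'s guess coincides with $X$ is at most $\max_{x\in U_X}\P(x)/\P(U_X)$, which is negligible whenever $\P$ has enough spread for meaningful compression. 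Putting the two bounds together gives $\A$ a non-negligible distinguishing advantage, contradicting the PRP security of $B$.

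The main obstacle lies in the first step: formalizing non-exhaustive in a way that (a) is the natural negation of the two informal exhaustive strategies and (b) simultaneously drives the real-world lower bound and the random-world upper bound. The random-world analysis in particular needs care about how $\P$, the size of $D$'s query transcript, and the compression rate interact, so that excluding polynomially many points from the support of $\P$ still leaves $\max_{x\in U_X}\P(x)/\P(U_X)$ negligible; this is precisely where the paper's caveat about ``low-entropy distributions or very low compression rates'' must come into play, and some care will be needed to handle hybrid strategies that interleave a small number of forward and backward queries.
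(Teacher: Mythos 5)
Your reduction skeleton---simulate $D$ against the oracle and turn its behavior into a distinguisher between $B_K$ and a random permutation---is the same as the paper's, but the step you yourself flag as the main obstacle, formalizing ``non-exhaustive,'' is where the proposal breaks, in two distinct ways. First, your definition is not the right negation of the informal strategies: you classify as exhaustive any decoder that ever queries $\ENC_K(X)$ or $\DEC_K(Y)$, but a decoder that somehow narrows $2^{100}$ equally likely plaintexts down to two candidates and queries both does make the hitting query, yet is doing something exponentially better than enumeration---and it is exactly the kind of efficient generic PEC scheme the theorem is meant to rule out. Your version of the statement is satisfied by such a scheme, so it does not establish the intended impossibility. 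Second, your random-world bound $\max_{x\in U_X}\P(x)/\P(U_X)$ is negligible only when $\P$ has high min-entropy, whereas the theorem assumes nothing beyond efficient samplability; for a concentrated $\P$, the decoder that outputs the mode without any queries is ``non-exhaustive'' under your definition and your distinguisher has zero advantage against it. This is not a corner case to patch later: the theorem must hold for low-entropy $\P$ too (there the exhaustive strategies genuinely succeed, per the paper's remark (a)), so any workable formalization has to be relative to $\P$ rather than an absolute success criterion.

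The paper resolves both issues by making the definition comparative. After normalizing $D$ so that it always queries its eventual output (so correct decoding means querying $X$ to $\ENC$ or $Y$ to $\DEC$), it plants decoys: an independent $X'\inr\P$ and a $Y'$ drawn uniformly from the fiber $\Y_{C(Y)}=C^{-1}(C(Y))$. Lemma~\ref{lemma-lowerbound} asserts that any feasible generic scheme queries $\ENC(X')$ before $\ENC(X)$ (resp.\ $\DEC(Y')$ before $\DEC(Y)$) with probability at least $1/2-\eps/(1-\delta)$; that is, the real target is found essentially no sooner than a dummy from the same distribution, which is precisely blind enumeration. The distinguisher outputs $\REAL$ if the real target is hit first and $\RP$ if the decoy is hit first or time runs out; in the random-permutation world the run of $D$ is symmetric in $X$ versus $X'$ and in $Y$ versus $Y'$, so $\Prp\le 1/2$ \emph{exactly}, with no entropy assumption, and a decoder beating the bound in the real world gives advantage larger than $\eps$. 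To salvage your route you would need to replace your absolute criterion with this relative one; the paper also deals with the secondary points you would eventually hit (samplability of the fiber $\Y_{C(Y)}$, the $\delta>0$ case, and budgeting the simulated run at $T/4$ so the distinguisher stays within time $T$).
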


\smallskip

The exact computational bounds are omitted in the Theorem's
statement for simplicity. More details on the computational bounds are given in Appendix~\ref{app:sec-proof} (Lemma 11), where 
Theorem~\ref{thm:lowerbound} is also proven. 
The following remarks are worth noting:

\begin{itemize}
\item[(a)] The exhaustive strategies are infeasible in most cases. However, for very low-entropy plaintext
distributions or for very low compression rates (e.g., when compressing a
ciphertext by just a few bits) these strategies can be efficient.
For example, consider a plaintext distribution consisting of a set $\X$ of
1,000 128-bit values uniformly distributed.  In this case, one can compress
the output of a 128-bit block cipher applied to this set by truncating the
128-bit ciphertext to 40 bits. Using a birthday-type bound one can show that
the probability of two values in $\X$ being mapped to the same ciphertext is
about $2^{-20}$. Hence, exhaustive strategy 1 above would succeed in
recovering the correct plaintext with probability $1-2^{-20}$.
In general, the compression capabilities under this strategy will depend on
the {\em guessing entropy} \cite{Massey94Guessing,Cachin97Entropy}
of the underlying plaintext distribution.
Another compression approach is to drop some of the bits in the ciphertext and
let the decoder search exhaustively, as in the strategy 2 above, for the original
ciphertext until the correct plaintext is recovered. This would assume that
the number of dropped bits allows for almost-unique decodability and that the
decoding procedure can identify the right plaintext.
The fact that these exhaustive compression strategies may be efficient for some
plaintext distributions shows that our theorem cannot exclude the existence of
efficient generic coding schemes for some plaintext distributions. Rather, the
theorem says that if there exists an efficient generic coding scheme for a given
distribution, then it must follow one of the exhaustive strategies.

\item[(b)] Compressing the ciphertext 
is theoretically possible. If
efficiency is not of concern, one could consider a brute-force compression algorithm that
first breaks the block cipher by finding its key\footnote{
This can be done by exhaustive search based on a known plaintext-ciphertext
pair or (for suitable plaintext distributions)
by decrypting a sequence of encrypted blocks and finding a key which
decrypts all blocks to elements in the underlying probability distribution.},
uses the key to decrypt, and then compresses the plaintext. If several blocks
of plaintext are compressed sequentially, they can be re-encrypted by the
compression algorithm resulting in an effective ciphertext compression.
This idea might be impractical, but it shows that if the existence of
generic compressors is ruled out, this must be tied to the efficiency of the PEC scheme.
Moreover, the above approach could work efficiently against an
insecure block cipher (in which case one may find the key by efficient
means), thus showing that {\em both efficiency and security} are essential
ingredients to our result.

\item[(c)] Theorem~\ref{thm:lowerbound} holds for generic compressors 
 that do not use the internals of an encryption
algorithm or the actual key in the (de)compression process. It does not rule out
the existence of good PEC schemes for specific secure block ciphers, like for 
instance AES. To achieve compression, though, the PEC scheme would have to
be contingent on the internal structure of a block cipher.
\end{itemize}

\section{Security of the CBC Compression Scheme}
\label{sec:security}

In this section we formally prove that compression and decompression
operations that we introduced on top of the regular CBC mode do not compromise
security of the original CBC encryption.  The proof follows standard
techniques in the cryptographic literature, 
showing that any efficient attack against
secrecy of a PEC scheme can be transformed
into an efficient attack against the original CBC encryption. Please refer to Appendix~\ref{sec:cryptdefs} for background 
on cryptographic definitions that are used in this section.

We start by formalizing the notion of security of a PEC scheme as a simple
extension of the standard definition of chosen plaintext attack (CPA) security recalled in Appendix~\ref{sec:cryptdefs}.
The essence of the extension is that in the PEC setting
the adversary
is given access to a combined oracle $(\ENC_K+C)(\cdot)$ which first encrypts the
plaintext  and then compresses the resultant ciphertext. 

\smallskip
\begin{definition}[CPA-PEC Indistinguishability Experiment]
Consider an encryption scheme $(Gen,\ENC,\\\DEC)$ and a PEC scheme $(C,D)$.  
The {\em CPA-PEC indistinguishability experiment} ${\sf Expt_\A^{cpa-pec}}$ is defined as follows:
\begin{enumerate}
\item a key $K$ is generated by running $Gen$;

\item the adversary $\A$ has oracle access to $(\ENC_K+C)(\cdot)$, and queries
it with a pair of test plaintexts $X_0,X_1$ of the same length;

\item  the oracle randomly chooses a bit $b\leftarrow\{0,1\}$ and returns the compressed ciphertext $c\leftarrow(\ENC_K+C)(X_b)$, called the challenge,  to $\A$;

\item the adversary $\A$ continues to have oracle access to $(\ENC_K+C)(\cdot)$ and is allowed to make arbitrary queries. Ultimately it makes a guess about the value of $b$ by outputting $b'$;

\item the output of the experiment, ${\sf Expt_\A^{cpa-pec}}$, is defined to be $1$ if $b' = b$, and $0$ otherwise. If ${\sf Expt_\A^{cpa-pec}}=1$, we say that $\A$ succeeded.

\end{enumerate}
\end{definition}

\medskip

\begin{definition}[Post-Encryption Compression Security]\label{def-cpa-pec}
A PEC scheme $(C,D)$ is called  $(T,\eps)$-indistin\-guish\-able under chosen
plaintext attacks ({\em CPA-PEC-secure}) if for every adversary $\A$
that runs in time $T$, 
$$
\Pr\Big[{\sf Expt_\A^{cpa-pec}} = 1\Big] < \frac{1}{2} + \eps,
$$
where the probability is taken over all random coins used by $\A$, as well as all random coins used in the experiment.
\end{definition}

\smallskip

We now formulate the security of our CBC compression scheme by the following theorem.

\begin{theorem}
Let $\E=(Gen,\ENC,\DEC)$ be a CBC encryption scheme that is $(T,\eps)$-indistinguishable under chosen plaintext attacks, let $\P$ be an efficiently-samplable plaintext distribution, and let $(C,D)$ be a $(\E,\P,\delta)$-PEC scheme. Then $(C,D)$ is $(T/ T_C,\eps)$-indistinguishable under chosen plaintext attacks, where $T_C$ is an upper bound on the running time of $C$.
\end{theorem}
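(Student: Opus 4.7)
The plan is to carry out a standard simulation/reduction argument: any successful attacker $\A$ against the CPA-PEC security of $(C,D)$ can be turned into an attacker $\A'$ against the CPA security of the underlying CBC scheme $\E$, with essentially the same advantage and only a mild slowdown. Concretely, I would suppose for contradiction that $(C,D)$ is not $(T/T_C, \eps)$-indistinguishable, so there is an adversary $\A$ running in time at most $T/T_C$ whose success probability in ${\sf Expt_\A^{cpa-pec}}$ exceeds $\tfrac{1}{2}+\eps$.

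Next I would construct $\A'$ that plays the ordinary CPA indistinguishability game against $\E=(Gen,\ENC,\DEC)$ and internally runs $\A$. The key observation is that the combined oracle $(\ENC_K+C)(\cdot)$ can be perfectly simulated by $\A'$ using only the $\ENC_K$ oracle it already has: on every plaintext query $X$ made by $\A$, $\A'$ forwards $X$ to its own $\ENC_K$ oracle, receives a ciphertext $\ENC_K(X)$, applies the (publicly known, key-independent) compressor $C$ to it, and returns $C(\ENC_K(X))$ to $\A$. When $\A$ submits its test pair $(X_0,X_1)$, $\A'$ submits the same pair to its CPA challenger, receives a ciphertext $c^*$, computes $C(c^*)$ and hands it to $\A$ as the PEC challenge. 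Finally $\A'$ outputs whatever guess $b'$ $\A$ outputs.

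The analysis is then essentially bookkeeping. The view presented to $\A$ inside this simulation is distributed exactly as in ${\sf Expt_\A^{cpa-pec}}$, because the bit $b$ chosen by the CPA challenger has the same distribution as the one chosen by the PEC challenger, and every oracle answer is of the correct form $C(\ENC_K(\cdot))$. Hence
\[
\Pr\bigl[\A' \text{ wins the CPA game}\bigr]=\Pr\bigl[{\sf Expt_\A^{cpa-pec}}=1\bigr]>\tfrac{1}{2}+\eps.
\]
For the running time, each oracle query of $\A$ adds at most $T_C$ work to $\A'$ (one invocation of $C$), and $\A$ itself runs in time $T/T_C$, so $\A'$ runs in time at most $T/T_C\cdot T_C = T$ up to a negligible additive constant for the final output. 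This contradicts the hypothesis that $\E$ is $(T,\eps)$-indistinguishable under CPA, establishing the theorem.

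The reduction is routine, so there is no single hard step. The only subtlety worth checking carefully is that $C$ really is key-independent and efficiently computable by the reduction, which is guaranteed by Definition~\ref{def-ped} (the compressor $C$ does not use the key), and that the accounting of the running time cleanly yields the bound $T/T_C$ rather than something worse; both follow from the generic, black-box nature of the PEC syntax.
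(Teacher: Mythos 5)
Your reduction is exactly the argument the paper gives: simulate the combined oracle $(\ENC_K+C)(\cdot)$ by querying $\ENC_K$ and applying the key-independent compressor $C$, forward the test pair and the compressed challenge, and account for the running time as (number of queries) $\times T_C$ plus the time of $\A$ to get the bound $T$. The proposal is correct and matches the paper's proof in both structure and bookkeeping.
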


\begin{proof}
Our proof employs a reduction to the security of $\E$. Specifically, we show that if there exists an adversary $\A_C$ that runs in time $T/ T_C$ and is able to distinguish between two compressed encryptions,
then there exists an adversary $A_\E$ that runs in time $T$ and compromises the security of $\E$. 
Note that the latter implies a break of security of the underlying block cipher using the well-known result by Bellare et al.~\cite{Bellare97Concrete}.

Assume, for contradiction, the existence of such an adversary $\A_C$. We construct an adversary $\A_\E$ as follows. $\A_\E$ invokes $\A_C$ and emulates its oracle $(\ENC_K+C)(\cdot)$. That is, for every query $X$ made by $\A_C$, $\A_\E$ uses its oracle $\ENC_K(\cdot)$ to compute $\ENC_K(X)$. Subsequently, it applies the compression algorithm $C$ on $\ENC_K(X)$  and forwards $C(\ENC_K(X))$ to $\A_C$. 
When $\A_C$ outputs the two test messages $X_0,X_1$, $\A_\E$ outputs these messages to its own oracle $\ENC_K(\cdot)$. Let 
$c^*$ denote the challenge ciphertext that its oracle returns. $\A_\E$ computes $c=C(c^*)$ and forwards $c$ to $\A_C$. 
 When $\A_C$ outputs a bit $b'$, $\A_\E$ outputs the same value. 

Given that $\A_C$ makes at most $T/ T_C$ queries and that the running time of $C$ is upper-bounded by $T_C$, $\A_\E$'s running time is at most $T$. Let $\frac{1}{2}+\epsilon$ denote the probability that $\A_C$ distinguishes successfully in its game. Clearly, it holds that $\A_\E$ distinguishes successfully in its game with the same probability, which is in contradiction to the security of $\E$.

\end{proof}


\section{Public-key Encryption Schemes}
\label{sec:publicEncryptionSchemes}

After a feasibility study of post-encryption compression
in the symmtric key setting, where sender and receiver share a key used for both
encryption and decryption,  one may wonder whether these results can be extended
to the public-key setting. 
Loosely speaking, in the public-key approach the key generation algorithm
produces two different keys (PK,SK) where PK is publicly distributed, yet SK is
kept a secret. Thereafter, any party who knows PK can encrypt messages and only
the owner of the secret key can decrypt.

Since public key operations for known public-key schemes are computationally
expensive, one typically encrypts streams of data by using a public-key scheme to
encrypt a symmetric key (e.g., a AES key) and then  using this key with a
symmetric cipher to encrypt the data. In this case, it
is clear that the results in this paper can be applied to the symmetric (bulk)
encryption of the data.
Yet, the question remains of whether we can compress the PK part of the
ciphertext.
For the reader familiar with the El Gamal cryptosystem ~\cite{ElGamal84Public},
we note that one can compress one of the two components of the
ciphertext if, for example, one uses XOR to combine the ephemeral El-Gamal key
with the message.
In the case of El Gamal over elliptic curves another well-known technique is
``point compression" \cite{Blake00Eliptic} which reduces the ciphertext to half its length.
Yet, one could hope for better post-encryption compression (in particular,
given that often the ciphertexts are longer than the plaintext). 
One result in this direction is Gentry's
technique for compressing Rabin's ciphertexts ~\cite{Gentry04How}. However, this technique
can be applied {\em before} encryption, not post-encryption.
Interestingly, the related result by Gentry regarding compression of Rabin's
signatures does allow for compression {\em after} the signature generation,
without the need for the signing key.

In conclusion, the question of post-encryption compressibility of ciphertexts
produced by public-key schemes is essentially unresolved.

\section{Compression Performance}
\label{sec:simResults}

Codes that are used to compress stream ciphers can be chosen to have arbitrary block lengths, since the method itself 
does not directly impose any constraints on the block length. On the other hand, the compression methods that we propose for CFB and CBC mode  must operate block-wise since decompression occurs serially and depends upon a previously decompressed block. In effect, the block length of Slepian-Wolf codes must be $m$.  We choose AES as a representative of a widely used block cipher and present our results under assumption that AES is used as the encryption scheme. 

The efficiency of Slepian-Wolf compression depends on the performance of underlying Slepian-Wolf codes. It was shown in~\cite{He06ALower, He06OnRelationship}
that Slepian-Wolf compression approaches entropy with speed $O(\sqrt{\frac{\log{n}}{n}})$, which is considerably slower than $O(\frac{1}{n})$ of arithmetic coding. It follows that for efficient Slepian-Wolf compression, the block length of Slepian-Wolf codes must be long.

Slepian-Wolf codes over finite block lengths have nonzero frame-error rates (FER), which implies that the receiver will sometimes fail to recover $E_K(X_{i-1})$ correctly. Such errors must be dealt with on the system level, as they can have catastrophic consequences in the sense that they propagate to all subsequent blocks. In the following it is assumed that as long as the FER is low enough, the system can recover efficiently, for instance by supplying the uncompressed version of the erroneous block to the receiver. The compression performance depends on the target FER.

We consider a binary i.i.d source with a probability distribution $\Pr(X = 1) = p$ and $\Pr(X=0) = 1-p$ which produces plaintext bits. A sequence of plaintext bits is divided into blocks of size $m$, which are encrypted and compressed as described in Section~\ref{sec:blockCiphers}. The receiver's task is to reconstruct $E_K(X_{i-1})$ from $\tilde{X}_i$ and side information $C(E_K(X_{i-1}))$. For the source considered, Slepian-Wolf decoding  is equivalent to error-correction over a binary symmetric channel (BSC), thus the underlying codes should yield a FER that is lower or equal to the target FER over the BSC. Compression efficiency can be evaluated in two ways:

\begin{itemize}
\item[a)] fix $p$ and determine the compression rate of a Slepian-Wolf code that satisfies the target FER;
\item[b)] pick a well-performing Slepian-Wolf code and determine the maximum $p$ for which target FER is satisfied.
\end{itemize}
We use low-density parity-check (LDPC) codes~\cite{Richardson08Modern}, which are known to be very powerful and we evaluate compression performance according to method b). In our simulations we used two LDPC codes\footnote{The degree distributions were obtained at http://lthcwww.epfl.ch/research/ldpcopt/}: the first yields compression rate 0.5 and has degree distribution $\lambda(x) = 0.3317x + 0.2376x^2 + 0.4307x^5$, $\rho(x)=0.6535x^5 + 0.3465x^6$ and the second yields compression rate 0.75 and degree distribution $\lambda(x) = 0.4249x + 0.0311x^2 + 0.5440x^4$, $\rho(x)=0.8187x^3 + 0.1813x^4$. All codes were constructed with the Progressive Edge Growth algorithm~\cite{Hu05Regular}, which is known to yield good performance at short block lengths.  Belief propagation~\cite{Richardson08Modern} is used for decoding and the maximum number of iterations was set to 100. 

\begin{table}[h!]
\centering
\begin{tabular}{|c|c|c|c|}
	\hline
	$p$    & Target FER & Compression Rate & Source Entropy  \\ \hline \hline
	0.026 & $10^{-3}$      & 0.50                         &  0.1739 \\ \hline
	0.018 & $10^{-4}$      & 0.50                         &  0.1301 \\ \hline
	0.068 & $10^{-3}$      & 0.75                         &  0.3584 \\ \hline
	0.054 & $10^{-4}$      & 0.75                         &  0.3032 \\ \hline
\end{tabular}
\caption{Attainable compression rates for $m= 128$ bits.}
\label{tab:cmprEffic_128}
\end{table}

Simulation results for block length, $m$, 128 and 1024 bits are shown in Tables~\ref{tab:cmprEffic_128} and~\ref{tab:cmprEffic_1024}, respectively. First, consider the current specification of the AES standard~\cite{Mao03Modern}, where  $m$ is 128 bits. At FER of $10^{-3}$ the maximum $p$ that can be compressed to rate 0.5 is 0.026, while the entropy for that $p$ is 0.1739. The large gap is a consequence of a very short block length. Notice that for higher reliability, i.e. lower FER, the constraints on the source are more stringent for a fixed compression rate.
 
\begin{table}[h!]
\centering
\begin{tabular}{|c|c|c|c|}
	\hline
	$p$    & Target FER & Compression Rate & Source Entropy  \\ \hline \hline
	0.058 & $10^{-3}$      & 0.50                         &  0.3195 \\ \hline
	0.048 & $10^{-4}$      & 0.50                         &  0.2778 \\ \hline
	0.134 & $10^{-3}$      & 0.75                         &  0.5710 \\ \hline
	0.126 & $10^{-4}$      & 0.75                         &  0.5464 \\ \hline
\end{tabular}
\caption{Attainable compression rates for $m= 1024$ bits.}
\label{tab:cmprEffic_1024}
\end{table}

When the $m$ is increased to 1024 bits (see Table~\ref{tab:cmprEffic_1024}) the improvement in performance is considerable. For instance, at FER $= 10^{-3}$ and compression rate 0.5, the source can now have $p$ up to 0.058. In future block-cipher designs one could consider larger
blocks, in particular as a way to allow for better post-encryption compression.




\section{Conclusion}
\label{sec:conclusion}

We considered compression of data encrypted with block ciphers without knowledge of the key. Contrary to a popular belief that such data is practically incompressible we show how compression can be attained. Our method is based on Slepian-Wolf coding and hinges on the fact that chaining modes, which are widely used in conjunction with block ciphers, introduce a simple symbol-wise correlation between successive blocks of data. 
The proposed compression was shown to preserve the security of the encryption scheme. Further, we showed the existence of a fundamental limitation to compressibility of data encrypted with block ciphers when
no chaining mode is employed.

Some simulation results are presented for binary memoryless sources. The results indicate that, while still far from theoretical limits, considerable compression gains are practically attainable with block ciphers and improved performance can be expected as block sizes increase in the future.

\appendices

\section{Compressing OFB and CFB Mode}
\label{app:ofb_cfb}

The main part of the paper describes how compression can be performed on data encrypted with block ciphers when 
they operate in the CBC mode. This section outlines how compression is achieved in other two common modes
associated with block ciphers,  output feedback (OFB) and cipher feedback (CFB). 

\subsection{Output Feedback (OFB)}
\label{ssec:ofb}

The mode depicted in Figure~\ref{fig:ofb} is called output feedback (OFB). Plaintext blocks in $\mathbf{X}^n$ are not 
directly encrypted with a block cipher. Rather, the block cipher is used to sequentially generate a sequence of pseudorandom
blocks $\mathbf{\tilde{K}}^n = \{\tilde{K}_i\}_{i=1}^n$ which serve as a one-time pad to encrypt the plaintext blocks. At the output of the encryption algorithm we have $\ENC_K(\mathbf{X}^n) = (\text{IV}, \mathbf{Y}^n)$.

\begin{figure}[htb!]
\centering
\includegraphics[width=0.65\textwidth]{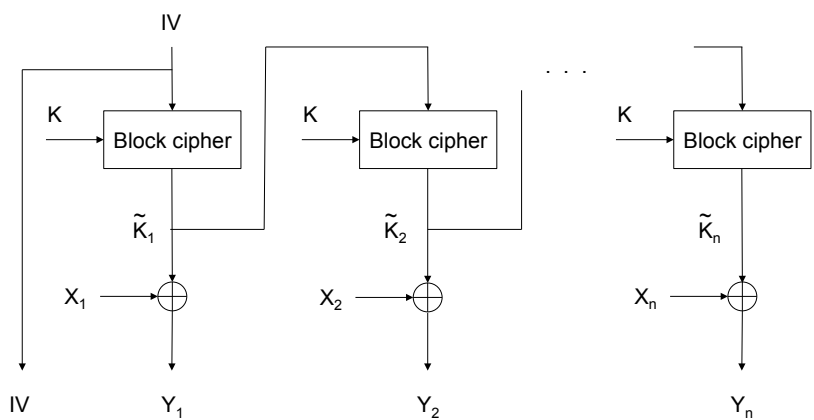}
\caption{Output feedback (OFB).}
\label{fig:ofb}
\end{figure}

Notice that each block $\tilde{K}_i$  is independent of plaintext blocks, therefore the OFB mode is analogous 
to the one-time pad encryption scheme and $\mathbf{Y}^n$ can be compressed in the same manner as described 
in~\cite{Johnson04OnCompressing}.  Most importantly, $\mathbf{\tilde{K}}^n$ and $\mathbf{X}^n$ are independent, therefore the block length
of Slepian-Wolf codes that are used to compress $\mathbf{Y}^n$ can be chosen arbitrarily. The IV is uniformly distributed and thus incompressible.

Formally, the compression and decompression algorithms can be described as follows. 
Let $(C_\text{OFB},D_\text{OFB})$ denote a  Slepian-Wolf code 
with encoding rate $R$ and block length $nm$. The Slepian-Wolf encoding function
is defined as $C_\text{OFB}:  (\mathcal{X}^m)^n \rightarrow \{1,\ldots,2^{mnR}\}$, and the Slepian-Wolf decoding function as
$D_\text{OFB}:  \{1,\ldots,2^{mnR}\} \times (\mathcal{X}^m)^n \rightarrow (\mathcal{X}^m)^n$. 
Given an output sequence from the encryptor $\ENC_K(\mathbf{X}^n) = (\text{IV}, \mathbf{Y}^n)$, compression is achieved
by applying the Slepian-Wolf encoder function $C_\text{OFB}$ to $\mathbf{Y}^n$, so that at the output we 
have $(\text{IV}, C_\text{OFB}(\mathbf{Y}^n))$. Note that $C_\text{OFB}$ does not require
knowledge of $K$.

Decompression and decryption are performed jointly. The receiver has $(\text{IV}, C_\text{OFB}(\mathbf{Y}^n))$ and knows
the secret key $K$, thus it can generate the sequence of pseudorandom blocks $\mathbf{\tilde{K}}^n$ using the IV. Subsequently,
it applies the Slepian-Wolf decoder function $D_\text{OFB}$ to $(C_\text{OFB}(\mathbf{Y}^n), \mathbf{\tilde{K}}^n)$ to recover $\mathbf{Y}^n$.
The original sequence of plaintext blocks $\mathbf{X}^n$ then equals $\mathbf{\tilde{K}}^n \oplus \mathbf{Y}^n$.

By the Slepian-Wolf theorem, the compression rate approaches
entropy of the source asymptotically in $nm$. That is,  even if $m$ is finite, entropy can still be achieved if $n \rightarrow \infty$.

\subsection{Cipher Feedback (CFB)}
\label{ssec:cfb}

Next, we discuss the cipher feedback (CFB) mode, which is depicted in
Figure~\ref{fig:cfb}. Similarly as in OFB, the plaintext blocks are not
subject to block cipher encryption and $\mathbf{Y}^n$ is obtained by XOR-ing plaintext blocks $\mathbf{X}^n$  with
pseudorandom blocks $\mathbf{\tilde{K}}^n$. At the output of the encryption algorithm we have $\ENC_K(\mathbf{X}^n) = (\text{IV}, \mathbf{Y}^n)$.

\begin{figure}[htb!]
\centering
\includegraphics[width=0.65\textwidth]{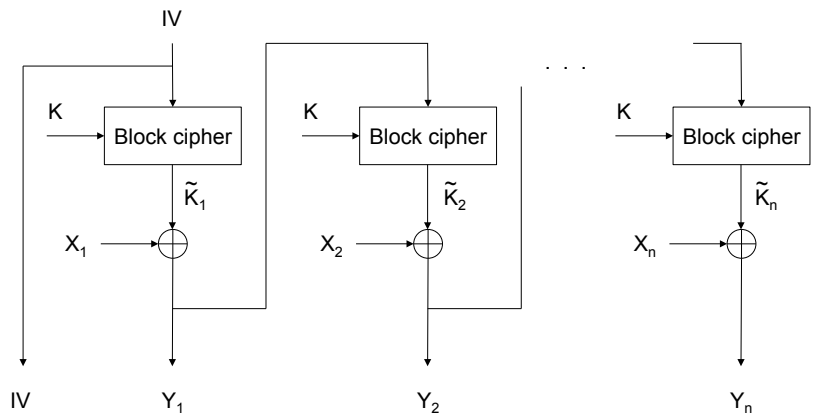}
\caption{Cipher feedback (CFB).}
\label{fig:cfb}
\end{figure}

However, in CFB mode, the pseudorandom blocks $\mathbf{\tilde{K}}^n$ are not independent of $\mathbf{X}^n$. Let the block
cipher using a secret key $K$ be characterized by the bijective mapping $B_K: \mathcal{X}^m \rightarrow \mathcal{X}^m$. Each block $\tilde{K}_i$ 
depends on the preceding plaintext block $X_{i-1}$ as follows:

\begin{equation}
\tilde{K}_i = B_K(X_{i}) = B_K(X_{i-1} \xor \tilde{K}_{i-1}),
\end{equation}
where $X_0$ is defined to be the IV.


Due to the dependence between $\mathbf{\tilde{K}}^n$ and $\mathbf{X}^n$, the proposed compression algorithm for the CFB mode operates
on individual ciphertext blocks $Y_i$, rather then on $\mathbf{Y}^n$ all at once like in the OFB mode. Without this
distinction, the joint decompression and decryption as proposed in~\cite{Johnson04OnCompressing} would not be possible. 
Let $(C_\text{CFB},D_\text{CFB})$ denote a  Slepian-Wolf code
with encoding rate $R$ and block length $m$. The Slepian-Wolf encoding function
is defined as $C_\text{CFB}: \mathcal{X}^m \rightarrow \{1,\ldots,2^{mR}\}$, and the Slepian-Wolf decoding function as
$D_\text{CFB}:  \{1,\ldots,2^{mR}\} \times \mathcal{X}^m \rightarrow \mathcal{X}^m$. Compression is achieved by applying
the Slepian-Wolf encoding function $C_\text{CFB}$ to each of the ciphertext blocks in $\mathbf{Y}^n$ individually. The compressed
representation of $\ENC_K(\mathbf{X}^n)$ is then $(\text{IV}, C_\text{CFB}(Y_1),\ldots, C_\text{CFB}(Y_n))$. Note again, that $C_\text{CFB}$ does not require
knowledge of $K$.

Joint decompression and decryption, depicted on Figure~\ref{fig:decompressor_CFB} is performed sequentially from left to right, since decryption of the $i$th ciphertext block requires knowledge of the $(i-1)$th plaintext block. 
\begin{figure}[htb!]
\centering
\includegraphics[width=0.65\textwidth]{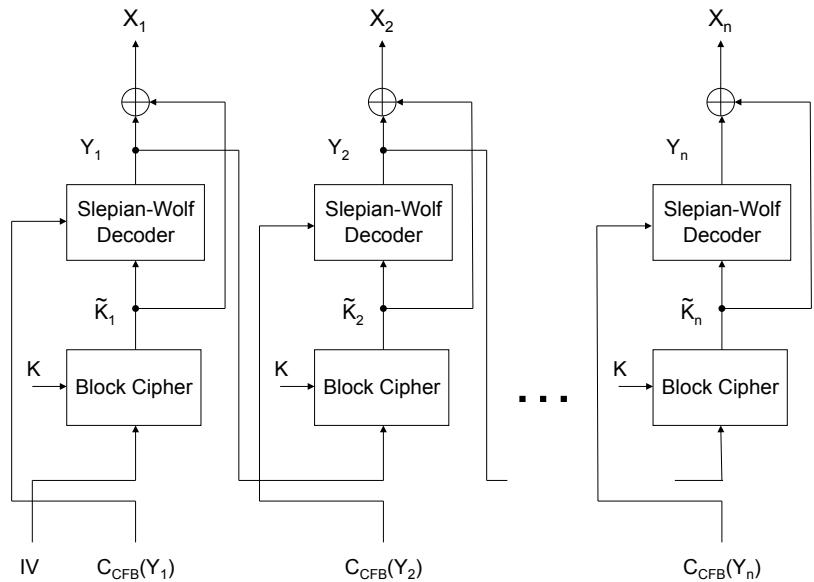}
\caption{Joint decryption and decoding in the CFB mode at the receiver is performed serially from left to right.}
\label{fig:decompressor_CFB}
\end{figure}
Initially, IV is mapped to $\tilde{K}_1$ by the block cipher. The Slepian-Wolf decoder function $D_\text{CFB}$ is applied 
to $(C_\text{CFB}(Y_1), \tilde{K}_1)$ to obtain $Y_1$. The first plaintext can now be obtained: $X_1 = \tilde{K}_1 \oplus Y_1$. 
Subsequently, $Y_1$ is mapped to $\tilde{K}_2$. The same process repeats to obtain $X_2$, and later all remaining plaintext blocks $\{X_i\}^n_{i=3}$.

Notice that in contrast to the OFB mode, this compression scheme is generally optimal only if $m \rightarrow \infty$. For finite $m$, the compression
is generally suboptimal, even if $n\rightarrow \infty$.

\section{Some Definitions from Cryptography}
\label{sec:cryptdefs}

For completeness, we recall some standard definitions from cryptography that are used in this paper.
See~\cite{Katz07Introduction} for details. These definitions
assume a fixed computational model over which time complexity is defined.


\begin{definition}[Block Cipher]
A block cipher $B$ with block size $m$ is a keyed family $\set{B_K}_{K\in \K}$
where for each $K$, $B_K$ is a permutation over $m$ bits%
\footnote{ By a {\em permutation over $m$ bits} we mean a deterministic
bijective function over $\zo^m$.}
and $\K$ is the set of all possible keys.
\end{definition}

The security of a block cipher is defined via the notion of {\em indistinguishability}.
Ideally, we would like the behavior of a block cipher to be indistinguishable
by computational means from that of a purely random permutation over $m$ bits. 
However, since a block cipher is a much smaller family of permutations than the
family of {\em all} permutations, the above is not fully achievable. Yet, if we
restrict our attention to ``computationally feasible distinguishers" then
we can obtain a meaningful notion of security applicable to actual
block ciphers such as AES.

Hence, the main ingredient in such definition is that of a {\em distinguisher}.
A distinguisher $\Dist$ is defined as a randomized algorithm with oracle access to
two $m$-bit permutations $\ENC$ and $\DEC$ where $\DEC=\ENC^{-1}$. 
$\Dist$ can 
make arbitrary queries to the oracles and eventually outputs a single bit 0 or 1.
We consider the runs of $\Dist$ in two cases: When the oracles are instantiated
with a truly random permutation and when instantiated with a block cipher (i.e.,
with the functions $B_K$ and $B_K^{-1}$ where $K$ is chosen with uniform
probability from the set $\K$).
Let  $\Preal$ be the probability that $\Dist$ outputs a 1 when the oracle was 
instantiated with $B$, where $\Preal$ is computed over all random coins of $\Dist$ and all choices of the key for $B$. Further, let $\Prp$ 
be the probability that $\Dist$ outputs a 1 when the oracle was instantiated with a random
permutation, where $\Prp$ is computed over all random coins of $\Dist$ and all permutations $\ENC$.
Intuitively, we can think of $\Dist$ as trying to decide if the oracles are
instantiated with a random permutation or with a block cipher;
hence, a distinguisher is considered successful if the difference  $|\Preal-\Prp|$, which is referred to as
the {\sf advantage}, is non-negligible. Formally, this leads to the following
definition.%
\footnote{This definition corresponds to the notion of strong pseudorandom permutation \cite{Bellare97Concrete}.} 

\begin{definition}[Block Cipher Security]\label{def-blockcipher}
A block cipher B is called {\sf $(T,\eps)$-secure} if no distinguisher $\Dist$ that runs in
time T has advantage larger than $\eps$.
\end{definition}

This definition
tries to capture the idea that, for secure block ciphers, even distinguishers that have the ability to run
for extremely large time $T$, say $T=2^{80}$, gain only negligible distinguishing
advantage, say $\eps=2^{-40}$. In other words, for any practical purpose the
quality of the block cipher is as good as if it was instantiated by a ``perfect
cipher" (purely random permutation).

A secure block cipher by itself does not constitute a secure private-key encryption scheme due to its deterministic
nature. Namely, two identical plaintexts are mapped into two identical ciphertexts, therefore valuable information about 
data patterns can be leaked to eavesdroppers. Rather, secure block ciphers are used as building blocks that can be used
to construct private-key encryption schemes that eliminate this vulnerability. Note that a secure private-key encryption scheme 
must be probabilistic or stateful. 

Toward a formal definition of a secure encryption scheme, consider the following experiment:

\begin{definition}[CPA Indistinguishability Experiment]\label{def-cpa-experiment}
Consider an adversary $\A$ and an encryption scheme $(Gen,\ENC,\DEC)$. The {\em chosen plaintext attack (CPA) indistinguishability experiment} ${\sf Expt_\A^{cpa}}$
is defined as follows:
\begin{enumerate}
\item a key $k$ is generated by running $Gen$;

\item the adversary $\A$ has oracle access to $\ENC_K(\cdot)$, and queries it with a pair of plaintexts $X_0,X_1$ of the same length;

\item the oracle randomly chooses a bit $b\leftarrow\{0,1\}$ and returns the ciphertext $q\leftarrow\ENC_K(X_b)$, called the challenge,  to $\A$;

\item the adversary $\A$ continues to have oracle access to $\ENC_K(\cdot)$ and is allowed to make arbitrary queries. Ultimately it makes a guess about the value of $b$ by outputting $b'$;

\item the output of the experiment, ${\sf Expt_\A^{cpa}}$, is defined to be $1$ if $b' = b$, and $0$ otherwise. If ${\sf Expt_\A^{cpa}}=1$, we say that $\A$ succeeded.
\end{enumerate}
\end{definition}

\begin{definition}[Private-Key Encryption Security]\label{def-cpa}
A private-key encryption scheme $(Gen,\ENC,\DEC)$ is called {\sf $(T,\eps)$-indistinguishable under chosen plaintext attacks} (or CPA-secure) if for every adversary $\A$ that runs in time $T$,
$$
\Pr\Big[{\sf Expt_\A^{cpa}} = 1\Big] < \frac{1}{2} + \eps,
$$
where the probability is taken over all random coins used by $\A$, as well as all random coins used in the experiment.
\end{definition}



\section{Proof of Theorem \protect\ref{thm:lowerbound}}
\label{app:sec-proof}


The proof of Theorem~\ref{thm:lowerbound} is by contradiction.
We assume a generic PEC scheme $(C,D)$ that 
departs in some noticeable way (later made more precise by means of the parameter $\eps$)
 from the exhaustive strategies when it is applied to a block cipher $B$. Subsequently, we show
how to use such a PEC scheme to build a distinguisher $\Dist$ that distinguishes
with advantage strictly larger than $\eps$ between the block cipher $B$ and a random
permutation, in contradiction to the security of $B$. 



For simplicity, and without loss of generality, we assume that $D$ does not make redundant
queries to $\ENC$ or $\DEC$.
Namely, no query $X$ to $\ENC$ or query $Y$ to $\DEC$ is repeated in a run. If
$X$ was output by $\DEC$ (resp., $Y$ output by $\ENC$) it is not entered into $\ENC$
(resp., into $\DEC$).
In addition, if the decoded output from $D$ is $X$, we assume
that $X$ was either input to $\ENC$ or output by $\DEC$. If
 $D$ does not follow these rules, it can be modified to do so.

For $X\inr \P$ 
\footnote{We use $X\inr \P$ to denote that $X$ is chosen at random according to the probability distribution $\P$.} 
and $Y=\ENC(X)$, assume that $C(Y)$ is passed as input to $D$. We  say that $D$ decodes correctly
if it queries either $X$ from $\ENC$ or $Y$ from $\DEC$ during its run. 
In other words, $D$ is not required to have the ability
to identify the correct plaintext, which simplifies our presentation without 
weakening our results. On the contrary, it shows that even if such a relaxed
decoding requirement is acceptable, the lower bound we prove still holds.

Finally, note that the formulation of the theorem assumes that the plaintext
distribution $\P$ is efficiently samplable. This assumption is used in an
essential way in our proof, though the efficiency requirement from the $\P$
sampler is very weak.
In addition, we assume that, for a given compressed ciphertext $C(Y)$, one can
sample uniformly from the set $\Y_{C(Y)}=C^{-1}(C(Y))$, which is  the set of all ciphertexts
mapped by $C$ to $C(Y)$. Additional discussion related to this assumption is given after the proof.


The proof of Theorem \ref{thm:lowerbound} uses the following lemma.

\begin{lemma}
\label{lemma-lowerbound}
Let $T$ be a time-bound parameter, let $B$ be a $(T,\eps)$-secure block cipher and let $\E=(Gen,\ENC,\DEC)$ be an encryption scheme, where
$Gen$ chooses $K\inr \K$ and $\ENC=B_K$, $\DEC=B_K^{-1}$. Let $\P$ be a plaintext distribution samplable in time $T/4$ and let $(C,D)$
be a generic $(\E, \P,\delta)$-PEC scheme. Then either $(C,D)$ runs in time that
exceeds $T/4$ (and hence is infeasible\footnote{
For secure block ciphers, a distinguisher should not be able to attain more then
a negligible advantage even if it runs for extremely large time $T$, say
$T=2^{80}$ (see Appendix~\ref{sec:cryptdefs}). Thus, a PEC scheme that runs in
time $T/4$ would be considered infeasible.})
or the following holds:


(i) Let $X \inr \P$ and $Y=\ENC(X)$. 
Consider a run of $D$ on input $C(Y)$ in which $D$ queries $\ENC(X)$, and let
$X'$ be a random element drawn from $\P$ independently of $X$.
Then, the probability that $D$ queries $\ENC(X')$ before $\ENC(X)$
is at least $1/2-\eps/(1-\delta)$.


(ii) Let $X \inr \P$ and  $Y=\ENC(X)$. 
Consider a run of $D$ on input $C(Y)$ in which $D$ queries $\DEC(Y)$, and let
$Y'$ be an element drawn uniformly from $\Y_{C(Y)}=C^{-1}(C(Y))$.
Then, the probability that $D$ queries $\DEC(Y')$ before $\DEC(Y)$
is at least $1/2-\eps/(1-\delta)$.

\end{lemma}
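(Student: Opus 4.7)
The plan is to prove Lemma~11 by contradiction, reducing to the $(T,\eps)$-security of the block cipher $B$. I will sketch the argument for (i); part (ii) then follows by an exactly analogous construction in which plaintexts sampled from $\P$ are replaced by ciphertexts sampled uniformly from $\Y_{C(Y)}$, with $\ENC$ playing the role that $\DEC$ plays in (i). Throughout I must verify that all auxiliary computations (notably the $\P$-sampler, the $\Y_{C(Y)}$-sampler, and the simulated run of $(C,D)$) cost at most $T/4$ each, so that the resulting distinguisher stays within the time bound $T$.

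Assume, toward contradiction, that a generic PEC scheme $(C,D)$ running in time at most $T/4$ violates (i). The plan is to build a distinguisher $\Dist$ against $B$ as follows. $\Dist$, on oracles $(\ENC,\DEC)$, uses the $\P$-sampler to draw two independent plaintexts, flips a uniform bit $b$, and labels them so that $X_b=X$ and $X_{1-b}=X'$. It then queries $Y\leftarrow\ENC(X_b)$, computes $c\leftarrow C(Y)$, and runs $D(c)$ while forwarding each of $D$'s oracle calls to its own oracles. During the simulation $\Dist$ records which of the two plaintexts $X_b, X_{1-b}$ is first presented to the $\ENC$-oracle by $D$, and outputs $1$ iff $\ENC(X_b)$ is queried strictly before $\ENC(X_{1-b})$.

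In the real world $(\ENC,\DEC)=(B_K,B_K^{-1})$, let $A$ be the event that $D$ queries $\ENC(X_b)$; correctness of the PEC scheme gives $\Pr[A]\geq 1-\delta$, and conditional on $A$ the events ``$X_b$ queried first'' and ``$X_{1-b}$ queried first'' partition, so the negation of (i) yields $\Pr[X_b \text{ before } X_{1-b}\mid A]>1/2+\eps/(1-\delta)$; multiplying, $\Pr_B[\Dist=1]>(1-\delta)/2+\eps$. The ratio $\eps/(1-\delta)$ is chosen in the lemma statement precisely so this product comes out clean. In the ideal world $(\ENC,\DEC)=(\pi,\pi^{-1})$ for a uniform permutation $\pi$, the key observation is that $\pi(X_b)$ is uniform on $\X^m$ independently of $b$ and of $X_{1-b}$, so the marginal posterior of $X_b$ given $c$ coincides with the prior $\P$, and hence $X_0$ and $X_1$ are i.i.d.\ $\P$ conditional on $c$. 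Using the uniform bit $b$ inside $\Dist$, this exchangeability translates into $\Pr_\pi[\Dist=1]\leq (1-\delta)/2$ (up to the $\eps$-closeness of the success probability of $(C,D)$ in the two worlds). Subtracting bounds the advantage of $\Dist$ by strictly more than $\eps$, contradicting the $(T,\eps)$-security of $B$.

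The main obstacle I expect is making the ideal-world symmetry rigorous. A naive relabelling $X_0\leftrightarrow X_1$ changes $c=C(\pi(X_b))$ and hence the entire execution of $D$, so one must instead condition on $c$ and argue that (a) the marginal over $\pi$ collapses the posterior of $X_b$ back to $\P$, and (b) $D$'s adaptive queries to $\pi,\pi^{-1}$ prior to the first hit on $\{X_0,X_1\}$ do not break the exchangeability, because a uniform permutation reveals nothing about an as-yet-unqueried plaintext beyond its $\P$-weight. Corner cases where $X_0=X_1$ or where neither is queried must be absorbed into the $\eps$-slack via the $(T,\eps)$-security hypothesis. For part (ii) this step is cleaner: $Y=\pi(X)$ is uniform on $\X^m$ in the ideal world, so $Y$ conditioned on $C(Y)=c$ is uniform on $\Y_c$ and therefore genuinely exchangeable with the independent sample $Y'\inr\Y_c$, which is exactly the role of the efficient $\Y_c$-samplability assumption noted in the paragraph preceding the lemma.
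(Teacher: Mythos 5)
Your overall strategy---reduce to a distinguisher that samples a decoy $X'$ (resp.\ $Y'$), runs $D$ on $C(\ENC(X))$ while forwarding its oracle calls, and decides ``real block cipher'' according to whether the genuine value or the decoy is hit first---is exactly the reduction in the paper's proof; the paper merely merges (i) and (ii) into a single distinguisher with four trigger events, and your extra labelling bit $b$ is cosmetic. Your treatment of the ideal-world exchangeability is, if anything, more careful than the paper's, which simply asserts that the run is independent of $X$ and $X'$. However, two steps as you have written them do not go through. The first is the probability accounting. In the real world you get $\Pr_B[\Dist=1]>(1-\delta)/2+\eps$, but in the ideal world the symmetry argument only yields $\Pr_\pi[\Dist=1]=\Pr_\pi[A]\cdot\tfrac12\le\tfrac12$; your claimed bound $(1-\delta)/2$ would need $\Pr_\pi[A]\le 1-\delta$, i.e.\ that $(C,D)$ behaves similarly under $B$ and under a random permutation---which is itself a distinguishing statement, so the parenthetical ``up to $\eps$-closeness'' is circular. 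As written the advantage is only $\eps-\delta/2$, which does not contradict $(T,\eps)$-security. The paper's fix is to change the output rule: output ``random permutation'' when the decoy is hit first, and a \emph{fair coin} when the run ends without either value being queried; then $\Pr_B[\Dist=1]=\tfrac12+\Pr_B[A]\bigl(\Pr_B[X\text{ first}\mid A]-\tfrac12\bigr)>\tfrac12+(1-\delta)\cdot\eps/(1-\delta)=\tfrac12+\eps$ while $\Pr_\pi[\Dist=1]\le\tfrac12$ unconditionally. This is also where the factor $\eps/(1-\delta)$ in the lemma is actually consumed, and note that $\Pr[A]\ge 1-\delta$ is not immediate for your event $A$: the paper's convention is that $D$ decodes correctly if it queries $X$ from $\ENC$ \emph{or} $Y$ from $\DEC$, which is one reason the paper runs both tests inside a single distinguisher.

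The second gap is the running-time bound. You assume the simulated run of $(C,D)$ costs at most $T/4$, but that bound is only guaranteed when the oracles are instantiated with $B$; under a random permutation $D$ has no a priori time bound, so your distinguisher's running time is uncontrolled in exactly the world where you need it to terminate. The paper imposes an explicit cutoff $T'=T/4$ on the simulated run and outputs ``random permutation'' on timeout (which never fires in the real world and can only lower the ideal-world acceptance probability), and this cutoff is where the $T/4$ budget in the lemma statement comes from. With these two repairs---the decoy/coin-flip output rule and the timeout---your construction coincides with the paper's.
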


\smallskip

We first show how the Lemma~\ref{lemma-lowerbound} suffices to prove Theorem~\ref{thm:lowerbound}.


\begin{proof}[Proof of Theorem~\ref{thm:lowerbound}]
For case (i), the
probability that $X$ is queried first is within $\eps/(1-\delta)$ of the probability that $X'$ is queried first,
where the latter is the probability of querying a plaintext that bears no
information (the run of $D$ is independent of $X'$). Assuming that $\delta<1/2$, we get $\eps/(1-\delta) < 2\eps$ and since $\eps$ is negligible (say $2^{-40}$) so is $2\eps$. 
This implies a plaintext-exhaustive strategy by $D$.
Similarly, for case (ii), the probability that the value $Y$ is computed first is within $\eps/(1-\delta)$ of the 
probability that an independent $Y' \inr \Y_{C(Y)}$ is computed first. This implies a ciphertext-exhaustive strategy by $D$.
\end{proof}
\ignore{ I MAY WANT TO SAY THE FOLLOWING:
Therefore the best strategy is to query the ciphertexts that map to $C(Y)$ under
some enumeration that maximizes the probability to guess $Y$ correctly.
However, from the proof of the lemma and the indistinguishability of $B$ from
a random permutation it follows that there is no efficient enumeration
that will find the correct $Y$ with probability significantly better than
$1/|\Cinv|$.
}

\medskip 

\begin{proof}[Proof of Lemma~\ref{lemma-lowerbound}]
First, consider the error-free case, i.e. $\delta = 0$. We show that if $(C,D)$ runs in time less
than $T/4$ and conditions (i), (ii) do not hold, we can build a
distinguisher (see Appendix~\ref{sec:cryptdefs}) that runs in time at most $T$ and distinguishes between $B$ and
a random permutation with an advantage larger than $\eps$, in contradiction to
the security of $B$. A distinguisher  interacts with oracles $\ENC$ and $\DEC$ and its
goal is to identify whether the
oracles are instantiated with the real block cipher $B$ or with a random
permutation. 
For clarity, we represent the output 0 from $\Dist$ by the symbol $\RP$
($\Dist$ decided that the oracles are
instantiated with a random permutation) and the output 1 by $\REAL$ ($\Dist$ decided that the oracles are instantiated by the block
cipher $B$).


We build a distinguisher $\Dist$ that uses the scheme $(C,D)$ and 
responds to the encryption and decryption queries made by $D$ with its $\ENC/\DEC$ oracles. When  $\ENC$ is a random permutation,
$(C,D)$ may run much longer than when $\ENC$ is $B$.
Thus, to bound the time complexity of $\Dist$, we set a time limit $T'=T/4$,
such that if the total time of $(C,D)$ exceeds $T'$,
$\Dist$ stops as well. As we show below, the parameter $T'$ is chosen as
$T/4$ to ensure that the total running time of $\Dist$ is no more than $T$.

Initially, $\Dist$ chooses $X \inr \P$ and receives  the
value $Y=\ENC(X)$ from its $\ENC$ oracle.
It computes $C(Y)$ and passes it as input to  $D$.
In addition, $\Dist$ chooses an independent $X'\inr \P$ and independent
$Y'\inr \Y_{C(Y)}$. It is assumed that $Y'$ can be sampled within time
$T/4$.
Subsequently, $\Dist$ monitors the queries to $\ENC/\DEC$ as requested by $D$ and
reacts to the following events:

\begin{enumerate}
\item if $X$ is queried from $\ENC$, stop and output $\REAL$;
\item if $X'$ is queried from $\ENC$, stop and output $\RP$;
\item if $Y$ is queried from $\DEC$, stop and output $\REAL$;
\item if $Y'$ is queried from $\DEC$, stop and output $\RP$.


\item if the run of $(C,D)$ exceeds $T'$, stop and output $\RP$
\end{enumerate}

It is possible that multiple such events take place in one run of $D$, for instance both $X$
and $X'$ may be queried from $\ENC$. In such case $\Dist$ stops as soon as it identifies 
first such event.

Let $\Preal$ and $\Prp$ be defined as in Appendix~\ref{sec:cryptdefs}. We
evaluate the advantage of $\Dist$, namely, the difference $|\Preal - \Prp|$.
First, consider a run of $\Dist$ when $\ENC/\DEC$ are instantiated by a random permutation.
The behavior 
of $(C,D)$ depends on $Y$ which is chosen at random and independently of $X$ and
$X'$, therefore the run is independent of both $X$ and $X'$. In effect, the probability that $X$ is queried before $X'$ 
is exactly 1/2. Similarly, if $Y$ is queried from $\DEC$,
 the behavior of $D$ depends
on $C(Y)$, while both $Y$ and $Y'$ have the same probability to be the chosen
as the preimage of $C(Y)$. Therefore, the probability that $Y$ precedes $Y'$ is exactly 1/2.
It follows that $\Dist$ outputs $\REAL$ with probability at most 1/2 (exactly 1/2
for the $X$ and $Y$ cases and with probability 0 if $\Dist$ exceeds time $T'$),
i.e., $\Prp\leq 1/2$.

Now, consider a run of $\Dist$ when $\ENC/\DEC$ are instantiated by a block cipher $B$ and its inverse, respectively. 
Assume, for contradiction, that $(C,D)$ stops before time $T'$ and either
(i) the probability that $X'$ is queried before $X$ is strictly less than $1/2-\eps$
or
(ii) the probability that $Y'$ is queried before $Y$
is strictly less than $1/2-\eps$.  
It follows that the probability that $\Dist$ outputs $\RP$ is strictly smaller
than $1/2-\eps$, therefore $\Preal>1/2+\eps$.

Thus, $\Dist$ distinguishes with advantage $|\Preal-\Prp|$, which is strictly larger than $\eps$.
The running time of $\Dist$ is upper-bounded by $T$: it includes three samplings (of $X,X'$ and $Y'$),
each assumed to take at most $T/4$ time, and the work of $(C,D)$ which $\Dist$
runs for total time $T/4$ at most.
In all, we have built a distinguisher against $B$ that runs time $T$ and has
advantage larger than $\eps$ in contradiction to the security of the block cipher $B$.

Assume now that $\delta > 0$. When a positive probability of error for $D$ is allowed,
it can occur that $D$ 
stops before time $T'$ and before $\Dist$ sees $X,X',Y$ or $Y'$.
To deal with this situation, we add a clause to the specification of $\Dist$
saying that if $(C,D)$ stops before time $T'$ and before
seeing any of the values $X,X',Y,Y'$, then $\Dist$ chooses a random bit $b$ and
outputs $\REAL$ if $b=1$ and $\RP$ if $b=0$.
We slightly increased the probability that $X'$ is queried before
$X$ (or $Y'$ before $Y$) in the block cipher case, however it is still
negligibly far from $1/2$. 
The proof is now a straightforward extension of the case when $\delta = 0$.
\end{proof}


\medskip
\noindent{\bf Remark.}
The proof of Lemma~\ref{lemma-lowerbound} assumes that the set $\Y_{C(Y)}$ is samplable in time $T/4$. 
While this assumption is likely to hold, we note that it is
enough to know the size of $\Y_{C(Y)}$. 
In such case, rather than sampling $\Y_{C(Y)}$, the events 3) and 4)
in the proof are replaced with the following one:
 if the number of queries to $\DEC$ performed by $D$
exceeds $|\Y_{C(Y)}|/2$ before $Y$ is queried, stop and output $\RP$.
%

We now sketch the proof of the theorem when none of the above conditions holds,
namely when $\Y_{C(Y)}$ is of unknown size and not samplable in time $T/4$.
Assume that $(C,D)$ performs noticeably
better than the exhaustive strategies when the $\ENC/\DEC$ oracles are instantiated 
with the block cipher $B$.
Then it must hold that $(C,D)$ runs noticeably faster when the $\ENC$ and $\DEC$ oracles are 
instantiated with the block cipher $B$ than with a random permutation, for 
the exhaustive
strategies are optimal for a random permutation (as shown above).
Using this (assumed) discrepancy between the runs of $(C,D)$ over $B$ and the
runs of $(C,D)$ over a random permutation, we can build a distinguisher against
$B$ in contradiction to the security of $B$. In the following, we formalize this
discrepancy and outline the construction of the distinguisher, where some
straightforward details are omitted.

For any plaintext $X$ let $T_B(X)$ denote the runtime of $(C,D)$ on input $X$
when $\ENC/\DEC$ oracles are instantiated with the block cipher $B$. Further, let
$T_R(X)$ denote the runtime of $(C,D)$ on input $X$ when $\ENC/\DEC$
oracles are instantiated with a random permutation.
We assume that there is a known time bound $T_0$ such that $T_B(X)<T_0$ for all
$X$ (we relax this assumption below) 
and there exists a non-negligible $\eps$ such that 
$ \Pr[T_B(X) < T_R(X)] \geq 1/2+\eps$.
The probability is over all choices of $X$ and all random coins of $(C,D)$. Further, it 
is over all key choices for $B$ for $T_B$ and over all random permutations for $T_R$.
We build a distinguisher $\Dist$ as follows:

\begin{enumerate}
\item
Choose $X \inr \P$ and run $(C,D)$ using the input oracles.
If time $T_0$ is exceeded, stop and output $\RP$,
otherwise proceed to Step 2.
\item
Let $T_1$ denote the running time of $(C,D)$ in step (1).
Run $(C,D)$ again on $X$ (same $X$ as in step 1)) but this time ignore the given
$\ENC/\DEC$ oracles. Instead, answer queries from $(C,D)$ with a random
permutation. If time $T_1$ is exceeded then output $\REAL$, otherwise output $\RP$.
\end{enumerate}

\noindent
We have the following: 
\begin{itemize}
\item If the $\ENC/\DEC$ oracles are instantiated with $B$, step (1) always
completes and in step (2) $\REAL$ is output with the probability that equals
$\Pr[T_R(X)>T_B(X)]$, which is at least $1/2+\eps$. 

\item If the $\ENC/\DEC$ oracles are instantiated with a random permutation,
$\REAL$ is output only if $T_R(X)\leq T_0$ and the runtime on $X$ in step (2)
exceeds the runtime on $X$ in step (1). The probability of the latter is at most
1/2 since both runs are over a random permutation.
\end{itemize}
Thus, $\Dist$ will output $\REAL$ with probability at least $1/2+\eps$
when the oracles are instantiated with $B$,
while it will output $\REAL$ with probability at most $1/2$
when the oracles are instantiated with a random permutation.
It follows that $\Dist$ is a $(T,\eps)$-distinguisher for $B$ where $T=2T_0$
since in each of the steps (1) and (2) $\Dist$ runs time at most $T_0$.

Note that  the requirement that $T_B(X)<T_0$ for all $X$ can be relaxed such that
the joint probability of $T_B(X)<T_0$ and $T_B(X)<T_R(X)$ is at least $1/2+\eps$.

\ignore{ OLD TEXT:
$\Dist$ needs to use a statistical test comparing the behavior of
its input oracles against a run where the oracles are random (note that
$\Dist$ can easily simulate the latter).
THE FOLLOWING WAS COMMENTED OUT IN THE OLD TEXT TOO
}

\bibliographystyle{IEEEtran}
\bibliography{mybiblio}
\end{document}